\newtheorem{Define}[proposition]{Definition}
\begin{document}
\bibliographystyle{alpha}
\title{{\bf Properties and Extensions of Alternating Path Relevance - I}}
\author{David A. Plaisted
}
\authorrunning{D. Plaisted}
\institute{UNC Chapel Hill, Chapel Hill, NC 27599-3175, USA\\
\email{plaisted@cs.unc.edu}}
\maketitle
\begin{abstract}
When proving theorems from large sets of logical assertions,
it can be helpful to restrict the search for a proof to those assertons that are {\em relevant}, that is, closely related
to the theorem in some sense.  For example, in the Watson system, a large knowledge base must rapidly be searched for relevant facts. It is possible to define formal concepts of relevance for propositional and first-order logic.
Various concepts of relevance have been defined for this, and some
have yielded good results on large problems.  We consider here in particular a concept based on alternating paths.  We
present efficient graph-based methods for computing alternating path relevance and give some results indicating its effectiveness.  We also propose an alternating path based
extension of this relevance method to DPLL with an improved time bound, and give other extensions to alternating path relevance
intended to improve its performance.

\keywords{Theorem Proving, Resolution, Relevance, Satisfiability, DPLL}
\end{abstract}

\section{Introduction}
In some applications, there may be knowledge bases containing thousands or even millions of assertions.  Selecting relevant assertions has the potential to significantly reduce the cost of finding a proof of a desired conclusion from such large knowledge bases and even from smaller ones.  Relevance can be defined in many ways.  In first-order clausal theorem proving, a concept of relevance based on "alternating" paths between clauses \cite{DBLP:journals/ai/PlaistedY03} permits relevant facts to be chosen automatically.
Here we present efficient graph-based methods for computing alternating path relevance.  We give some theoretical and
practical results indicating its effectiveness.  We also present a relationship between alternating path relevance and the set of
support strategy \cite{woroca:65} in theorem proving. We incorporate this approach to relevance into the DPLL method.  Finally, we
present some extensions to alternating path relevance with a view to improving its effectiveness.

\subsection{Related Work}
Meng and Paulson \cite{DBLP:journals/japll/MengP09} describe a relevance approach in which clauses are relevant if they share enough symbols with clauses that have previously been found to be relevant.  They give clauses a “pass mark,” a number between 0 and 1.  This is used as a filter, and the tests becomes increasingly strict as the distance from the conjecture increases.  Their method makes use of two parameters, $p$ and $c$.  They finally chose $p = 0.6$ and $c = 2.4$ based on experiments. This approach significantly improves the performance of the Sledgehammer theorem
prover \cite{DBLP:journals/iandc/MengQP06} used with Isabelle \cite{DBLP:conf/tphol/WenzelPN08}.  They also
tried the alternating path relevance approach \cite{DBLP:journals/ai/PlaistedY03}, but apparently did not set a bound on relevance distance, but rather included all of
the relevant clauses.

Pudlak \cite{DBLP:conf/cade/Pudlak07} defines relevance in terms of finite models.  The idea is to find a set ${\bf B}$ of clauses such that all of the finite models
that have been constructed so far and satisfy ${\bf B}$, also
satisfy the theorem $C$.  Such a set ${\bf B}$ of clauses is a candidate for a sufficient set for proving the theorem.  Clauses $F$ are added to
${\bf B}$ by constructing models of the theorem that do not satisfy ${\bf B}$, and finding clauses $F$ that contradict such models.
This approach is attractive because humans seem to use semantics when proving theorem.  It
had good results on the bushy division of MPTP \cite{DBLP:journals/jar/Urban04}, and was extended in the SRASS system \cite{DBLP:conf/cade/SutcliffeP07}
where it was among the
most successful systems in the MPTP division.  The SRASS system uses a syntactic similarity measure in addition to semantics, as an aid in ordering the axioms. However, the basic system \cite{DBLP:conf/cade/Pudlak07} appears to be inefficient in the presence of a large number of clauses.  SRASS is also resource intensive for large theories.

The MaLARea system \cite{DBLP:conf/cade/Urban07} uses machine learning from previous proof attempts to guide the proof of new problems.  It uses a Bayesian learning system to choose a relevance ordering of the axioms based on the symbols that appear in the conjecture.   In
each proof attempt, the $k$ most relevant clauses are used for various $k$, and various time limits are tried.  It has given good performance on the "chainy" division of MPTP, on which it solved more problems than E, SPASS, and SRASS.  This system has been modified \cite{DBLP:conf/cade/UrbanSPV08} to take into account semantic features of the axioms
in choosing the relevance ordering, similar to SRASS, and the combined system has outperformed both MaLARea and SRASS on the MPTP problems.  The basic MaLARea system has been combined with neural network learning and has performed very well on the large theory batch division in the CASC competitions \cite{Sut16} in 2017 and 2018.  The Divvy system \cite{DBLP:conf/cade/RoedererPS09} also orders the axioms, but does so
solely on the basis of a syntactic similarity measure.  For each proof attempt, a subset of the most relevant axioms is used.  This system has obtained good results on the MPTP challenge problems.

The Sine Qua Non system \cite{DBLP:conf/cade/HoderV11} evaluates how closely two clauses are related by how many symbols they share, and which
symbols they share.  It also considers the length of paths between clauses as a measure of how closely they are related.  The d-relevance idea of Sine Qua Non is straightforward, but in order to reduce the number of relevant axioms, a triggering technique is used.  However, using the least common symbol as a trigger doesn’t work well, so they add a tolerance, a depth limit, and a generality threshold.  Experiments revealed that some of these parameters don’t make much of a difference, and at least the depth limit turned out to be important.  Tolerance was less important for the problems tried.  This relevance measure can be computed very fast even for large sets of clauses.  This approach has performed very well in
the large theory division of CASC, and has been used by Vampire \cite{10.1007/3-540-48660-7_26} and by some competing systems as well.  However, sometimes
clauses may be closely related in the Sine Qua Non system but not in the alternating path approach; for example, the literals
$P(a,b)$ and $\neg P(b,a)$ are closely related in the former system but unrelated in the latter.

These systems do at least show that relevance can be effective in aiding theorem provers on large knowledge bases.  Also, especially relevance measures that can be computed quickly may lead to spectacular increases in the effectiveness of theorem provers on very large knowledge bases.  Even if the original knowledge base is small, large numbers of clauses may be generated during a
proof attempt, and relevance techniques may help to select derived clauses leading to a proof.
\section{Terminology}

\subsection{Connectedness of two clauses}
This section contains basic definitions  related to alternating paths.  Some of the formalism is new.  Standard definitions of terms, clauses, sets of clauses, substitutions, resolution, resolution proof, and satisfiability in first-order logic are assumed.  If $A$ is an atom, then
$A$ and $\neg A$ are literals, and each is the complement of the other.  If $L$ is $\neg A$ then
$\neg L$ will generally denote $A$.

The number of clauses in a set $S$ of first-order clauses is denoted by $|S|$.

The relation $\equiv$ on literals denotes syntactic identity.  That is, it is the smallest relation
on literals such that for all atoms $A$, $A \equiv A$, $\neg A \equiv \neg A$, $A \equiv \neg \neg A$, and
$\neg \neg A \equiv A$.  This treatment of negation permits one to say that $A$ and $B$
are complementary literals iff $A \equiv \neg B$.

A pair $L$ and $M$ of literals are {\em complementary unifiable} if there are substitutions
$\alpha$ and $\beta$ such that $L \alpha \equiv \neg M\beta$.

An {\em alternating path} from $C_1$ to $C_n$ in a set $S$ of clauses is a sequence $C_1, p_1, C_2, p_2$, $\dots,
C_{n-1}, p_{n-1}, C_n$ where $C_i \in S$ for all $i$, $p_i$ is a pair
$(L_i,M_{i+1})$ of literals, $L_i \in C_i$, $M_{i+1} \in C_{i+1}$, $L_i$ and $M_{i+1}$ are complementary unifiable, and $L_i \not\equiv M_i$ for all $i$.  Frequently the $p_i$ are omitted.  Such a path is called {\em alternating} because it alternates between connecting literals in possibly different clauses and switching to a different literal in the same clause.  An example of such a path for
propositional calculus is the sequence $(\{p_1,p_2,p_3\}, (p_1, \neg p_1), \{\neg p_1, q_1,q_2\}, (q_1,\neg q_1),
\{\neg q_1, \neg r_1$, $\neg r_2\})$.  The sequence $(\{p_1,p_2,p_3\}, (p_1, \neg p_1), \{\neg p_1, q_1,q_2\}$, $(\neg p_1,p_1),
\{p_1, \neg r_1, \neg r_2\})$ is not an alternating path.

Why this particular definition of alternating path is chosen will become clear later, as its properties
are presented.

The {\em length} of an alternating path ($C_1, \dots, C_n$) is $n$, counting only the clauses.

The {\em relevance distance} $d_S(C_1, C_2)$ of $C_1$ and $C_2$ in $S$ is the length of the shortest alternating path in $S$ from $C_1$ to $C_2$.  This is a measure of how closely related to each other two clauses are.  If there is no alternating path
in $S$ from $C_1$ to $C_2$ then $d_S(C_1,C_2)$ is $\infty$.

If $T$ is a subset of set $S$ of clauses then $d_S(T,C)$ is $\min \{d_S(D,C) : D \in T\}$.  This is
called the {\em relevance distance} of $C$ from $T$.  Also, $R_{n,S}(T) =\{C \in S : d_S(T,C) \le n\}$.  This is frequently written as $R_n(T)$.  Clauses in $R_{n,S}(T)$ are said to be relevant {\em at distance n} from $T$.
If $d_S(T,C) < \infty$  then we say $C$ is {\em relevant} (for $S$ and $T$).

Two clauses are {\em alternating connected} or {\em relevance connected} in $S$ if there is an alternating path in $S$ between them.

Note that alternating connectedness is not transitive.  Example:  $C_1 = \neg P \vee Q, C_2 = \neg Q, C_3 = Q \vee \neg R$, and $S = \{C_1, C_2, C_3\}$.
$C_1$ and $C_2$ are alternating connected in $S$ as are $C_2$ and $C_3$, but $C_1$ and $C_3$ are not.

$Gr(S)$ is the set of ground instances of clauses in $S$.  If $S$ has no constant symbols then one such symbol is added for purposes of making $Gr(S)$ non-empty.

Two clauses $C$ and $D$ are {\em ground connected} in $S$ if they have ground instances $C'$ and $D'$ that are relevance connected in $Gr(S)$.

The {\em ground relevance distance} $d^g(C_1, C_2)$ of $C_1$ and $C_2$ is
$\min\{d_{Gr(S)}(C'_1,C'_2) : C'_1, C'_2 \mbox{ are ground}$ $\mbox{ instances of $C_1$ and $C_2$, respectively}\}$.

Sometimes $d^g$ can be larger than $d$, for example, consider $C_1$ and $C_3$ where
$C_1 = p(a), C_2 = \neg p(x) \vee p(f(x)), C_3 =\neg p(f(f(x))$, and $S = \{C_1, C_2, C_3\}$.

A set $S$ of clauses is {\em relevance connected} if between any two clauses $C$, $D$ in $S$ there is an alternating path.

\section{Properties of alternating path relevance}

\begin{theorem}
\label{relevance.connected.theorem}
If $S$ is a relevance connected set and $|S| = n$ then between any two clauses in $S$ there
is an alternating path of length at most $2n-2$.
\end{theorem}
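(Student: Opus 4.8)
\noindent\textit{Proof plan.}
The plan is to take a \emph{shortest} alternating path between the two given clauses and use minimality to bound how often a single clause can occur on it; the length bound then drops out by counting positions. We may assume $n \ge 2$, and since the one-element sequence is an alternating path of length $1 \le 2n-2$ it suffices to treat two distinct clauses $C$ and $D$. Because $S$ is relevance connected there is at least one alternating path from $C$ to $D$, and I would fix a shortest one, say $C = C_1, p_1, C_2, \dots, p_{m-1}, C_m = D$ with $p_i = (L_i, M_{i+1})$. The feature of the definition that the argument exploits is that the only non-structural requirement is $L_i \not\equiv M_i$ at each \emph{internal} clause --- one must leave a clause by a literal different from the one by which one arrived --- and that this requirement is vacuous at $C_1$ (no incoming literal) and at $C_m$ (no outgoing literal).

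Next I would show that $C_1$ and $C_m$ each occur exactly once on the shortest path. If $C_1 = C_j$ for some internal $j$, then since $L_j \in C_j = C_1$ the sequence $C_1, (L_j, M_{j+1}), C_{j+1}, \dots, C_m$ is again an alternating path from $C$ to $D$: the pairs beyond position $j$ are inherited unchanged, and the alternating condition at the new first clause is vacuous. This path is strictly shorter, contradicting minimality; and $C_1 = C_m$ is impossible since $C \ne D$. The argument for $C_m$ is symmetric, truncating the path at the earlier occurrence of $C_m$.

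The hard part will be ruling out three occurrences of a single clause. Suppose some clause $E$ occurs at least three times on the shortest path; by the previous step $E$ is neither $C_1$ nor $C_m$, so all its occurrences lie at internal positions. Pick three of them, $C_i = C_j = C_k = E$ with $1 < i < j < k < m$. At these positions the path arrives via literals $M_i, M_j, M_k \in E$ and leaves via $L_i, L_j, L_k \in E$, with $M_i \not\equiv L_i$, $M_j \not\equiv L_j$ and $M_k \not\equiv L_k$. I claim that $M_a \not\equiv L_b$ for some $a < b$ in $\{i,j,k\}$: otherwise $M_i \equiv L_j$, $M_i \equiv L_k$ and $M_j \equiv L_k$, so $M_i \equiv M_j$ (both being $\equiv L_k$) and hence $M_j \equiv M_i \equiv L_j$, contradicting $M_j \not\equiv L_j$; here the transitivity of $\equiv$ does the work. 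Given such $a < b$, I would delete the clauses at positions $a+1, \dots, b$ and reconnect $C_a = E$ to $C_{b+1}$ by the pair $(L_b, M_{b+1})$: this is a legal pair because $L_b \in C_b = E = C_a$, and the alternating condition at the spliced position is precisely $M_a \not\equiv L_b$, which holds. The result is a strictly shorter alternating path from $C$ to $D$, again a contradiction.

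Putting the pieces together: on the shortest path $C$ and $D$ each occur once, and each of the $n-2$ other clauses of $S$ occurs at most twice, so its length satisfies $m \le 1 + 1 + 2(n-2) = 2n-2$. I expect the three-occurrence step to be the main obstacle --- one must identify which pair of occurrences admits a shortcut, and then verify that the splice does not break the inequality $L \not\equiv M$ at any clause still on the path --- while the endpoint step is easy precisely because that inequality constrains nothing there.
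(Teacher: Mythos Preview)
Your proposal is correct and follows essentially the same approach as the paper: the paper's proof (which merely sketches the idea, citing \cite{DBLP:journals/ai/PlaistedY03}) says exactly that on a shortest alternating path each clause appears at most twice and the endpoints only once, and your argument supplies the missing details---in particular the pigeonhole step showing that among three internal occurrences of the same clause some pair admits a valid splice.
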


\begin{proof}
This proof is essentially from Plaisted and Yahya \cite{DBLP:journals/ai/PlaistedY03}, with
a slightly different notation.  The idea is that if there is a relevance path between two clauses, then there is one in which each clause appears at most twice.  Also, the clauses at the ends of the path
clearly only need to appear once, or else there is a shorter path.
\end{proof}

\subsection{Minimal unsatisfiable sets of clauses}

This section relates alternating paths to unsatisfiability of sets of clauses.  This  section and section \ref{completeness.and.sos} are basically reviews of known results, with some new formalism.   First we have the following
result \cite{DBLP:journals/ai/PlaistedY03} :

\begin{theorem}
\label{relevance.connectedness}
If $S$ is a minimal unsatisfiable set of clauses then $S$ is relevance connected.
\end{theorem}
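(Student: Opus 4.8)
The plan is to prove the statement in local form: for each clause $C\in S$, letting $\mathcal{R}$ be the \emph{relevance closure} of $C$ --- $C$ together with every clause of $S$ alternating connected to $C$ in $S$ --- I would show $\mathcal{R}=S$, which over all $C$ yields relevance connectedness. Two warm-up reductions clear away routine matters. A minimal unsatisfiable $S$ with $|S|\le 1$ is just the empty clause and is relevance connected trivially, and a minimal unsatisfiable $S$ with $|S|\ge 2$ contains no empty clause (a singleton would be a proper unsatisfiable subset) and no tautology (deletable without affecting unsatisfiability), so I assume $S$ is tautology-free with at least two clauses. Then I reduce to the propositional case: by Herbrand's theorem take a finite unsatisfiable $G\subseteq Gr(S)$ and shrink it to a minimal unsatisfiable one; minimality of $S$ forces $G$ to contain a ground instance of each clause of $S$, since otherwise the missing clause could be dropped from $S$. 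An alternating path in $G$ lifts to one in $S$ --- ground complementary literals are complementary unifiable, and literals of a ground instance $F\sigma$ that are not $\equiv$ come from literals of $F$ that are not $\equiv$ --- so it suffices to treat ground minimal unsatisfiable sets.

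So let $S$ be ground and minimal unsatisfiable, fix $C$, form $\mathcal{R}$, and suppose $B:=S\setminus\mathcal{R}$ is nonempty. Then $\mathcal{R}$ and $B$ are proper nonempty subsets, hence satisfiable by minimality, and the goal is to combine a model $N$ of $\mathcal{R}$ with a model $N'$ of $B$ into a model of $S$ --- a contradiction. The alternating structure governs the boundary: for $E\in\mathcal{R}$ let the \emph{in-ports} of $E$ be the literals on which an alternating path from $C$ can enter $E$, and call a literal \emph{exposed} if it occurs in $C$ or occurs in some $E\in\mathcal{R}$ as a literal distinct from an in-port of $E$. One checks: (i) any clause containing the complement of an exposed literal is already in $\mathcal{R}$, so each literal of a clause of $B$ is the complement of a non-exposed literal; (ii) a clause of $\mathcal{R}$ has at most one non-exposed literal, namely its in-port; (iii) an atom appearing in both a clause of $\mathcal{R}$ and a clause of $B$ is seen by the $B$-clauses in a single polarity $\lambda$, with $\lambda$ exposed and $\neg\lambda$ non-exposed. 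The merge then assigns the shared atoms so as to satisfy the $B$-side (roughly, making the $\lambda$'s true), and uses (i)--(iii) together with the alternating paths reaching the endangered clauses of $\mathcal{R}$ --- the ones $N$ could satisfy only through a non-exposed in-port --- to re-route their satisfying literals, the "exit on a fresh literal" requirement being exactly what supplies the needed slack.

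I expect this merge to be the crux. The analogous fact for the plain connection graph, whose links join clauses sharing complementary literals with no alternating restriction, is a one-line argument: any nontrivial partition without a cross link would let separate models be pasted together. Upgrading it to alternating connectedness is the real work, since the relation is not transitive and each intermediate clause must be left on a new literal --- this is what forces the in-port bookkeeping --- and one must also be careful with clauses that pin a shared atom down tightly, such as unit clauses, where the naive assignment has to be adjusted. As this section is presented as a review, one may also simply appeal to the corresponding result of Plaisted and Yahya \cite{DBLP:journals/ai/PlaistedY03}, or route the argument through the relationship between relevance and the set of support strategy.
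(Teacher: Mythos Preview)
The paper gives no proof of this theorem at all: it simply states the result and cites Plaisted and Yahya \cite{DBLP:journals/ai/PlaistedY03}. Your closing remark --- that one may just appeal to that reference, or route the argument through the set-of-support connection --- is therefore exactly the paper's ``proof.''

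Your primary proposal, the direct combinatorial argument, is a genuinely different route and is largely sound where you spell it out. The reduction to the ground case is correct (minimality of $S$ forces the minimal unsatisfiable $G\subseteq Gr(S)$ to hit every clause, and lifting an alternating path is straightforward since distinct ground literals come from distinct first-order literals under a common substitution). Your claims (i)--(iii) about exposed literals and in-ports are also correct; in particular (iii) follows because if a shared atom appeared in $B$ with both polarities, then both $\lambda$ and $\neg\lambda$ would be non-exposed, whereas tracing the path one step back from the unique in-port shows the opposite polarity is always exposed.

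Where the proposal is genuinely incomplete is the merge, and you flag this honestly. Setting each shared atom to its $\lambda$-polarity does satisfy $B$, but an endangered clause $E\in\mathcal{R}$ --- one that $N$ satisfies only through its non-exposed in-port $\neg\lambda_p$ --- becomes false, and ``re-routing'' is not yet an argument: the other literals of $E$ are Type-R, $N$ makes them all false, and flipping them may cascade into other $\mathcal{R}$-clauses. The unit case you mention is the sharpest form: if $E=\{\neg\lambda_p\}$, then after substitution $E$ becomes the empty clause, so the naive assignment is not merely awkward but impossible, and one must instead set $\neg\lambda_p$ true and argue that every $F\in B$ containing $\lambda_p$ is satisfied elsewhere. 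Completing this requires either a more careful choice of $N$ (not an arbitrary model of $\mathcal{R}$) or an inductive argument along the path structure; as written it is a sketch of where the difficulty lies rather than a proof. The set-of-support route you mention at the end closes this gap cleanly and is what the paper's later Theorem~\ref{support.path.theorem} would supply.
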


Using Theorem \ref{relevance.connected.theorem} above, we obtain the following:

\begin{theorem}
If $S$ is a set of clauses, $S'$ is a minimal unsatisfiable subset of $S$, and $|S'| = n$ then between any two clauses in $S'$ there is an alternating path in $S'$ of length at most $2n-2$. 
\end{theorem}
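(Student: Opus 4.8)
The plan is to observe that this statement is an immediate consequence of the two preceding theorems, chained together, applied to the set $S'$ itself rather than to $S$. The key realization is that all the hypotheses needed are already packaged: $S'$ is minimal unsatisfiable by assumption, and $|S'| = n$.

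First I would apply Theorem~\ref{relevance.connectedness} with the set $S'$ in the role of $S$: since $S'$ is a minimal unsatisfiable set of clauses, it is relevance connected, meaning that between any two clauses of $S'$ there is an alternating path lying entirely within $S'$. Here it matters that the definition of ``alternating path in a set of clauses'' quantifies only over clauses of that set, so the path produced uses exclusively clauses from $S'$.

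Next I would apply Theorem~\ref{relevance.connected.theorem}, again taking $S'$ as the ambient set: $S'$ is now known to be relevance connected and has $|S'| = n$, so between any two of its clauses there is an alternating path of length at most $2n-2$, and that path is by construction an alternating path in $S'$. Combining these two invocations gives exactly the claimed conclusion for an arbitrary pair of clauses $C, D \in S'$.

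I do not expect a genuine obstacle here; the only point requiring a moment's care is the bookkeeping about which set the alternating path lives in --- one must invoke both earlier theorems with $S'$, not $S$, as the ambient clause set, so that the length bound $2n-2$ is driven by $|S'| = n$ and the resulting path stays inside $S'$. No new combinatorial argument is needed beyond what Theorems~\ref{relevance.connectedness} and~\ref{relevance.connected.theorem} already supply.
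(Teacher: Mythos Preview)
Your proposal is correct and matches the paper's approach exactly: the paper presents this theorem without a separate proof, introducing it with ``Using Theorem~\ref{relevance.connected.theorem} above, we obtain the following,'' which is precisely the chaining of Theorems~\ref{relevance.connectedness} and~\ref{relevance.connected.theorem} applied to $S'$ that you describe. Your observation about invoking both theorems with $S'$ rather than $S$ as the ambient set is the only point worth noting, and you have handled it correctly.
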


Now, short proofs imply small minimal unsatisfiable sets of clauses, which in turn implies that
if there is a short refutation from $S$ then there is a minimal unsatisfiable subset of $S$ in which any two clauses are connected by a short alternating path.

\begin{theorem}
If there is a resolution refutation of length $n$ from $S$ then there is a minimal
unsatisfiable subset $T'$ of $S$ in which any two clauses are connected by an alternating path
of length at most $2n-2$.
\end{theorem}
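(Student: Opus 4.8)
The plan is to funnel this through the previous theorem by observing that a short refutation exposes a small minimal unsatisfiable subset of $S$. First I would let $S_0$ be the set of clauses of $S$ that occur as leaves (input clauses) of the given refutation. Since the empty clause is derived from $S_0$ alone and resolution is sound, $S_0$ is unsatisfiable; and since every member of $S_0$ is one of the $n$ clauses occurring in the refutation, $|S_0| \le n$. If instead one measures the length of a refutation by the number of inference steps, essentially the same counting gives a bound of the form $|S_0| \le n+1$, and only the additive constant in the conclusion changes; I will use the clause-count convention so that the stated bound $2n-2$ comes out exactly.

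Next, since $S_0$ is a finite unsatisfiable set, it contains a minimal unsatisfiable subset $T'$; clearly $T' \subseteq S_0 \subseteq S$ and $|T'| \le |S_0| \le n$. Now apply the earlier theorem (if $S'$ is a minimal unsatisfiable subset of a clause set and $|S'| = m$, then between any two clauses of $S'$ there is an alternating path in $S'$ of length at most $2m - 2$) with $S'$ taken to be $T'$ and $m = |T'|$. This yields, for every pair of clauses in $T'$, an alternating path lying entirely within $T'$ of length at most $2|T'| - 2 \le 2n - 2$, which is exactly the claim. Invoking the earlier theorem is also what frees us from worrying about whether deleting clauses from $S$ lengthens alternating paths: that theorem already delivers a path internal to the minimal unsatisfiable set.

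I do not expect a real obstacle here; the argument is a short reduction once the correct intermediate object, namely the set of input clauses of the refutation, has been named. The one place needing care is the bookkeeping linking ``refutation of length $n$'' to ``$|T'| \le n$'': fixing the length convention, and handling the degenerate case in which the empty clause already belongs to $S$, so that $n = 1$, the only minimal unsatisfiable subset is the singleton containing the empty clause, and the assertion about ``any two clauses'' is vacuous. Beyond that, everything reduces to soundness of resolution, finiteness of the refutation, and the already-established bound on alternating-path length in relevance connected sets.
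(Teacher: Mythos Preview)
Your proposal is correct and follows essentially the same route as the paper: take the set of input clauses used in the refutation (your $S_0$, the paper's $T$), note it has size at most $n$ and is unsatisfiable, pass to a minimal unsatisfiable subset $T'$, and invoke the earlier $2m-2$ bound. The paper's proof is a three-line version of exactly this argument, without your additional remarks on length conventions and the degenerate $n=1$ case.
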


\begin{proof}
Suppose there is a resolution refutation of length $n$ from $S$.  Let $T$ be the set of input clauses (clauses in $S$) used in the refutation; then $|T| \le n$.  Also, $T$ is unsatisfiable
so it has a minimal unsatisfiable subset $T'$ with $|T'| \le n$.  Then any two clauses in $T'$
are connected by an alternating path of length at most $2n-2$.
\end{proof}

We are not aware of any such result that applies to other relevance measures.
If one wants to add a small proof restriction to other relevance measures, then one way to do this is to combine them with
alternating path relevance.

The following result is also easily shown, but without a bound on the length of the path:

\begin{theorem}
If $S$ is a minimal unsatisfiable set of clauses then any two clauses in $S$ are ground connected.  
\end{theorem}

\begin{proof}
The idea is that if $S$ is unsatisfiable then it has a finite unsatisfiable set of ground instances
which therefore has a relevance connected subset including at least one instance of each
clause in $S$.
\end{proof}

\subsection{Completeness and Set of Support}
\label{completeness.and.sos}
Using relevance, one can filter the potentially large set $S$ of input clauses (clauses in $S$) to obtain a smaller set
$S'$ of relevant clauses, and then one can search for a proof from $S'$ instead $S$.  This can
be done using the concept of a {\em set of support}.

\begin{Define}
If $S$ is an unsatisfiable set of clauses, then a {\em support set} for $S$ is a subset $S'$ of
$S$ such that any unsatisfiable subset of $S$ has non-empty intersection with $S'$.
\end{Define}

Such support sets are easily constructed from interpretations of the input clauses in many
cases.  In particular, if $I$ is an interpretation of the set $S$ of clauses, then $\{C \in S : I \not\models C\}$ is a set
of support for $S$.  If it is decidable whether $I \models C$ for clauses $C$, then such a set of support can be
effectively constructed.  This is true, for example, for finite models of $S$.

\begin{theorem}
If $S$ is unsatisfiable, $S'$ is a support set for $S$, there is a length $n$ refutation from $S$ and
$m \ge 2n-2$ then $R_{m,S}(S')$ is unsatisfiable.
\end{theorem}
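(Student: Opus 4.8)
The plan is to pull out of the given refutation a \emph{small} minimal unsatisfiable subset of $S$ that is forced to contain a clause of $S'$, and then to use the length bound of Theorem~\ref{relevance.connected.theorem} to show that this whole subset already sits inside $R_{m,S}(S')$; since the subset is unsatisfiable, so is $R_{m,S}(S')$. Concretely, I would first let $T$ be the set of input clauses of $S$ actually used in the length $n$ refutation, so $|T| \le n$. The refutation derives the empty clause, so $T$ is unsatisfiable, and hence contains a minimal unsatisfiable subset $T' \subseteq T$ with $|T'| \le n$.

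Next, since $S'$ is a support set for $S$ and $T'$ is an unsatisfiable subset of $S$, the intersection $T' \cap S'$ is non-empty; fix a clause $D \in T' \cap S'$. By Theorem~\ref{relevance.connectedness}, $T'$ is relevance connected, and then Theorem~\ref{relevance.connected.theorem} applied to $T'$ gives, for every $C \in T'$, an alternating path in $T'$ from $D$ to $C$ of length at most $2|T'| - 2 \le 2n - 2 \le m$. The one step that deserves care is the observation that a path found inside $T'$ is also an alternating path in $S$: the defining conditions of an alternating path (the successive clauses being in the set, complementary unifiability of the linking literal pairs, and the $L_i \not\equiv M_i$ side condition) refer only to the clauses and literals occurring on the path, not to the ambient clause set, and $T' \subseteq S$. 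Hence $d_S(D,C) \le 2n - 2$, so $d_S(S',C) \le d_S(D,C) \le m$, i.e. $C \in R_{m,S}(S')$; as $C \in T'$ was arbitrary, $T' \subseteq R_{m,S}(S')$.

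Finally, $T'$ is unsatisfiable and contained in $R_{m,S}(S')$, so $R_{m,S}(S')$ is unsatisfiable, as required. I expect the only genuinely delicate point to be this monotonicity remark — that an alternating path valid within the smaller set $T'$ remains valid within the larger set $S$, so distances measured in $S$ are bounded by distances measured in $T'$ — since without it the bound from Theorem~\ref{relevance.connected.theorem} would only control distances inside $T'$ rather than inside $S$. The remaining ingredients (refutation length bounding the number of input clauses, existence of a minimal unsatisfiable subset, and the support-set property forcing a common clause) are routine.
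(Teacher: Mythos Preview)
Your proposal is correct and follows essentially the same route as the paper: extract from the length-$n$ refutation a minimal unsatisfiable subset $T'$ of size at most $n$ (as in the immediately preceding theorem), use the support-set property to pick $D \in T' \cap S'$, invoke Theorems~\ref{relevance.connectedness} and~\ref{relevance.connected.theorem} to get alternating paths in $T'$ of length at most $2n-2$ from $D$ to every clause of $T'$, and conclude $T' \subseteq R_{m,S}(S')$. Your explicit remark that an alternating path in $T'$ remains an alternating path in the larger set $S$ (so that $d_S \le d_{T'}$) is exactly the point needed to transfer the bound, and is correct for the reason you give.
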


This leads to the following theorem proving method $Rel_{m,S}(S')$:

\begin{center}
Choose $m$, compute $R_{m,S}(S')$, and test $R_{m,S}(S')$ for satisfiability.
\end{center}

If $|R_{m,S}(S')|$ is much smaller than $|S|$, then $Rel_{m,S}(S')$
can be much faster than looking for a refutation directly from $S$.
However, because one does not know which
$m$ to try, one can perform $Rel_{1,S}(S')$, $Rel_{2,S}(S')$, $Rel_{3,S}(S')$ et
cetera, interleaving the computations because even for a fixed $m$, $Rel_{m,S}(S')$ may not terminate.  This leads to the following theorem proving
method:

\begin{center}
for $i = 1$ step 1 until infinity do in parallel $Rel_{i,S}(S')$ od;
\end{center}

An example of a support set for $S$ mentioned above  is the set of clauses contradicting an interpretation $I$ of $S$.  This provides a way to use semantics ($I$) in theorem proving, which humans also commonly use.  Also, if $T$ is a satisfiable subset of $S$, then $S \setminus T$ is a support set for $S$.  For
example, if $T$ is a collection of axioms from some satisfiable theory such as number theory then $S \setminus T$ is a support set for $S$.  Typically one attempts to prove a theorem $R$ from
some collection $A$ of general, satisfiable axioms.  Then one converts $A \wedge \neg R$ to
clause form, obtaining set $S_1 \cup S_2$ of clauses where $S_1$ are the clauses coming from $A$ and $S_2$ are the clauses coming from $\neg R$.  Then $S_2$ is a support set for $S_1 \cup S_2$.  Support sets are often specified with a theorem.
\begin{Define}
If $S$ is unsatisfiable and $U$ is a support set for $S$, then the $U$ {\em support radius} for $S$, $U_S^{rad}$,
is the
minimal $m$ such that $R_{m,S}(U)$ is unsatisfiable.  If $S$ is satisfiable, then $m$ is $\infty$.
\end{Define}

The $U$ {\em support neighborhood} for $S$, $U_S^{nbrhd}$, is then $R_{m,S}(U)$.
A $U$ {\em support neighborhood clause} for $S$ is a clause in $U_S^{nbrhd}$ and a $U$ {\em support
neighborhood literal} for $S$ is a literal appearing in a $U$ support neighborhood clause for $S$.
The $U$ {\em support diameter} for $S$ is the maximum relevance distance between two clauses within the $U$ support neighborhood for $S$.

\section{Time Bound to Compute Relevance}

Relevance neighborhoods can be computed within a reasonable time bound, which makes it feasible to use relevance techniques in theorem proving applications involving large knowledge bases.  The computation methods presented here are new.  This
computation
requires finding all pairs of complementary unifiable literals in clauses of $S$, constructing a graph
from them, and then applying breadth-first search to compute the relevance distances from a support
set to all clauses
in $S$.

\subsection{Pairwise unification}
To compute the time for the unifications, let $||S||$ be the length of $S$ in characters when written out as a string of symbols, and
similarly $||C||$ and $||L||$ for clauses and literals, respectively.  Suppose $L_i$, $1 \le i \le n$ are the literals in $S$. Unification can be done in
linear time  \cite{PATERSON1978158}, so  testing all pairs of literals for
unifiability takes time proportional to $\Sigma_{1 \le i,j \le n} (||L_i|| + ||L_j||)$.  In practice, term
indexing  \cite{RaSeVo:01} permits this to be done much faster.  However, with some algebra, $\Sigma_{1 \le i,j \le n} (||L_i|| + ||L_j||) = 2n\Sigma_{1 \le i \le n}||L_i||$ which is quadratic in $||S||$.

\subsection{The graph $G_S$}
Let $G_S$ be a graph obtained from $S$ for purposes of computing relevance distances.  The
nodes $V$ of $G_S$ are triples $<L,C,in>$ and $<L,C,out>$ where $C \in S$ and $L \in C$.  There
are two kinds of edges in $G_S$:  Type 1 edges from $<L,C,out>$ to $<M,D,in>$ for all nodes
$<L,C,out>$ and $<M,D,in>$ in $V$ such that $L$ and $M$ are complementary unifiable. There
are also type 2 edges from nodes $<L,C,in>$ to $<M,C,out>$ where $L$ and $M$ are
distinct literals of $C$.  Type 1 edges encode the links between clauses in an alternating path and
type 2 edges encode switching from one literal of a clause to another literal in an alternating
path.  The number of edges can be quadratic in $||S||$.

A {\em path} in a graph is a sequence $(v_1, v_2, \dots, v_n)$ in which there is an edge from
$v_i$ to $v_{i+1}$ for all $i$.  The length of this path is $n$.
Then there is a direct correspondence between alternating paths in $S$ and paths in $G_S$.
Suppose $C_1, p_1, C_2, p_2, \dots,
C_{n-1}, p_{n-1}, C_n$ is an alternating path in $S$ and $p_i$ is the pair
$(L_i,M_{i+1})$ of literals.  The corresponding path in $G_S$ is
$<L_1,C_1,out>$, $<M_2, C_2, in>$, $<L_2, C_2,out>$, $<M_3,C_3,in>$, $\dots$,
$<M_n, C_n, in>$.  The length of the alternating path is $n$ but the length of the path
in $G_S$ is $2(n-1)$ if $n > 1$.

\subsection{Relevance neighborhoods}
Suppose one wants to find $R_k(U)$ where $U$ is a subset of $S$.  This can be done
using the well-known linear time breadth-first search algorithm  \cite{Cormen:2009:IAT:1614191}, which outputs the distances of all nodes from the starting node (and can be easily modified to have more than
one starting node).  First one sets the
distances of all nodes $<L,C,in>$ and $<L,C,out>$ for $C \in U$ to 1, and all other nodes have distances of
infinity. Then one applies breadth-first search which computes the length of all shortest paths from the nodes $<L,C,out>$ for $C \in U$ to all other nodes.  From this,
$R_k(U)$ is obtained as the clauses $C$ appearing in nodes $<L,C,in>$ whose distance is
less than or equal to $2(k-1)$. Because this graph has a size that is quadratic in $||S||$, the overall method is quadratic. However, to compute $R_k(U)$, it is only necessary to construct and search the
portion of the graph consisting of nodes at distance $2(k-1)$ or less, which can result in a
considerable savings of time, especially for very large knowledge bases and small $k$.

\subsection{The propositional case}
If $S$ is purely propositional then it is possible to find relevance neighborhoods much faster, as
follows:  The edges of type 1 are replaced by a smaller number of edges.
Suppose $C_1, C_2, \dots, C_m$ are all the clauses in $S$ containing a positive literal $P$ and
$D_1, D_2, \dots, D_p$ are all the clauses in $S$ containing $\neg P$.  Then the type 1 edges
from $<P,C_i,out>$ to $<\neg P,D_j, in>$ are replaced by edges from $<P,C_i,out>$ to
a new node $<P>$ and edges from $<P>$ to  $<\neg P,D_j, in>$.  Also, simlar edges are
added with the sign of $P$ reversed:  the type 1 edges
from $<\neg P,D_j,out>$ to $<P,C_i, in>$ are replaced by edges from $<\neg P,D_j,out>$ to
a new node $<\neg P>$ and edges from $<\neg P>$ to  $<P,C_i, in>$.  This means that
the path in $G_S$ corresponding to a path in $S$ becomes a little longer, but the number of
edges in $G_S$ is reduced from quadratic to linear, making the whole algorithm linear time.
Then an alternating path of length $n$ in $S$ corresponds to a path of length $3(n-1)$ in
$G_S$ if $n > 1$.  For small relevance distances, one need not construct all of $G_S$, as before.

Bounds similar to these but less precise were given earlier \cite{jepl:84,plaisted:80}.

\section{Branching Factor Argument}
Now we give a new evidence that relevance can help to find proofs faster.

Suppose each (first-order) clause has at most $k$ literals and each predicate appears with a given sign in at most $b$ clauses in
a set $S$ of clauses.  Suppose a clause $C$ is in an alternating path; how many clauses $D$ can appear
after it in alternating paths in $S$?  If $C$ is the first clause in the path then any one of its up to $k$ literals can
connect to at most $b$ other clauses, so there can be up to $bk$ clauses $D$ after $C$ in various alternating paths in $S$.  If $C$ is not the first clause in the path, then it cannot exit by the same literal it entered by, so the number of clauses $D$ that can appear after it in various alternating paths in $S$ is at most $b(k-1)$.   Thus there is a branching factor of at most $b(k-1)$ at each level except at the first level.

Suppose $S'$ is a support set for $S$. Then there can be $|S'|$ clauses $C_1$ that are the first
clauses in some alternating path starting in $S'$.  There can be up to $bk$ clauses $C_2$ after $C_1$ in alternating paths in $S$ and for
each clause $C_i$ for $i > 1$ there can be at most $b(k-1)$ clauses after it in various alternating paths in $S$.  So
in paths  $C_1, C_2, \dots, C_n$ starting in $S'$ with various clauses $C_1, \dots, C_n$, the number of clauses $C_n$ in all such paths is at most
$|S'|b^{n-1}k(k-1)^{n-2}$.  All clauses in $R_n(S')$ must appear in some such path.  The total number of clauses appearing in such paths is then bounded
by $|S'|\Sigma_{1 \le i \le n} b^{i-1}k(k-1)^{i-2}$.  If $bk > 1$ this is bounded by
$2|S'|b^{n-1}k(k-1)^{n-2}$.  If this quantity is muich smaller than $|S|$ and $R_n(S')$ is
unsatisfiable then the effort to find a proof from $R_n(S')$ can be much less than the effort to find
a proof from all of $S$.

The alternating path approach is intended for proofs with small relevance bounds, so that the exponent $n$ should be small.   Also, if clause splitting is used to break unifications, then the effective value of $b$ may be reduced.  The value $k$ is typically small in first-order clause sets.

\section{Experimental evidence with some knowledge bases}

There is another evidence that relevance can help to reduce the size of the clause set that one
must consider.  This is based on an implementation of relevance \cite{jepl:84} in which a first-order situation calculus knowledge base KB1 of
about 200 clauses and a first-order knowledge base KB2 of about 3000 clauses expressing a map of a portion of the United States were considered and relevance methods were
applied.  For these examples, $R_n(S')$ was computed for $S'$ representing various
queries and with increasing $n$ until a non-empty set of relevant clauses was found.  This approach
used additional pruning techniques to reduce the size of the relevant set such as a purity
test in which clauses were deleted from $R_n(S')$ if they had literals that did not complement
unify with any other literals in $R_n(S')$; additional details can be found in the paper.

\begin{tabular}{llllll}
Know.& Query & Dist. &No. of& No. & Note\\
base & &bound & clauses& needed\\
&&&found&for proof\\\\
1 & 1 & 5 & 10 & 7\\
1 & 2 & 6 & 24 & 18\\
2 & 3 & 4 & 22 & 6\\
2 & 3 & 4 &  7 & 6 & Different\\
&&&&& strategy\\
2 & 4 & 3 & 4 & 4 & \\
2 & 5 & 3 & 4 & 4 & \\
\end{tabular}

For query 5, the method instantiated the query itself and found four instances of the
query, all needed for a refutation.  In all cases a small set of unsatisfiable clauses was found.
These results were pubished previously \cite{jepl:84} but are not widely known.
\section{Large Knowledge Bases}

For some large theories, the actual refutations tend to be small; this also suggests that
relevance techniques can be helpful.  The following quote \cite{9f90b05ba36c4befb1996afc86ab124d} concerns the SUMO knowledge
base:

\begin{quotation}
"The Suggested Upper Merged Ontology (SUMO) has provided the TPTP problem library with problems that have large numbers of axioms, of which typically only a few are needed to prove any given conjecture."
\end{quotation}
However, in this case, one is frequently testing to see if the axioms themselves are satisfiable.
Because relevance techniques presented depend on knowing that a large subset of the axioms is satisfiable,
in order to use relevance one would have to find such a subset even without knowing that all
the axioms were satisfiable.
The following quotation  \cite{Ramachandran05first-orderizedresearchcyc:} concerns another large knowledge base:
\begin{quotation}
 "...the knowledge in Cyc's KB is common-sense knowledge. Common-sense knowledge is more often used in relatively shallow, 'needle in a haystack' types of proofs than in deep mathematics style proofs."
\end{quotation}
Concerning the Sine Qua Non approach \cite{DBLP:conf/cade/HoderV11}, the authors write: 
\begin{quotation}
"Problems of this kind usually come either from knowledge-base reasoning over large ontologies (such as SUMO and CYC) or from reasoning over large mathematical libraries (such as MIZAR). Solving these problems usually involves reasoning in theories that contain thousands to millions of axioms, of which only a few are going to be used in proofs we are looking for."
\end{quotation}
Of course, the knowledge base used for Watson \cite{Ferrucci:2012:ITW:2481742.2481743} was huge, but the proofs (if one can call them proofs) had to be found
quickly, so they had to be relatively small.

\section{Relationship to the Set of Support Strategy in Resolution}
Another evidence that relevance can help comes from the usefulness  in many cases of the set of support strategy \cite{woroca:65} for first-order theorem proving.  This strategy is included as one of the standard options in many first-order theorem provers.  This is because experience has shown that the set of support strategy often helps to find proofs faster.
Some experimental evidence that first-order theorem proving strategies using set of support techniques outperform
others for large theories has been obtained  \cite{Reif1998}.

Now we show formally that the set of support strategy restricts attention to relevant clauses, and in fact, uses the most relevant clauses first.  This result is new.  It is surprising that set of support should correspond to relevance defined in terms of alternating paths in this way, because the definition of alternating paths is non-intuitive.  Because the set of support strategy often helps to find proofs, this is evidence that relevance is also helpful for proof finding.

\begin{Define}
If $S$ is a set of first-order clauses then a {\em resolution sequence from $S$} is a sequence $C_1, C_2, C_3, \dots$
of clauses where each $C_i$ is either in $S$ or is a resolvent of $C_j$ and $C_k$ for $j,k < i$.  There is a {\em parent} function that for any $i$ returns
such a pair $(j,k)$, but mostly this will be left implicit.
\end{Define}

\begin{Define}
Suppose $S$ is a set of first-order clauses and $S'$ is a subset of $S$. Then
a clause $C_i$ in a resolution sequence $C_1, C_2, C_3, \dots$ from $S$ is $S'$-{\em supported} if it is either in $S'$ or at least one of its parents is $S'$-supported in the sequence.
The {\em set of support strategy} for $S$ with support set $S'$ is the set of resolution sequences $C_1, C_2, \dots, C_n, \dots$ from $S$ in
which each non-input clause $C_i$ is $S'$-{\em supported} in the resolution sequence.
\end{Define}

The set of support strategy is {\em complete} in the sense that if $S$ is an unsatisfiable set of first-order clauses and $S'$ is a support set for $S$ then there
is a set of support {\em refutation} from $S$, that is, a resolution sequence from $S$ according to the set of support strategy for $S$ and $S'$ in which
$C_n$ is the empty clause, denoting false.

\begin{Define}
Suppose $P$ is the alternating path $C_1, p_1, C_2, p_2, \dots,
C_{n-1}, p_{n-1}, C_n$ and $p_i = (L_i$, $M_{i+1})$ for $1 \le i < n$ and $n > 1$.  Let
$C'_i$ be $C_i \setminus \{L_i,M_i\}$ for $1 < i < n$.  Let $C'_1$ be $C_1 \setminus \{L_1\}$ and let $C'_n$ be
$C_n \setminus \{M_n\}$.  Then ${\cal P}^*$ is $C'_1 \cup C'_2 \cup \dots \cup C'_n$.  ${\cal P}^*$  is considered to be a clause
denoting the disjunction of its literals.  If $n = 1$ then
$(C_1)^* = C_1$.
\end{Define}

\begin{Define}
A collection ${\cal A}$ of alternating paths {\em covers} a first-order clause $C$ if for every literal $L \in C$
there is an alternating path $P \in {\cal A}$ and a literal $M \in P^*$ such that $L$ is an instance of
$M$.
\end{Define}

\begin{theorem}
\label{covering.resolvent}
Suppose ${\cal A}$ covers first-order clause $C$, and $D$ is a resolvent of $C$ and $C'$ for some
clause $C'$.  Let $L \in C$ and $L' \in C'$ be literals of resolution in the respective clauses.  Let
$P$ be a path in ${\cal A}$ and $L_P$ be a literal in $P^*$ such that $L$ is an instance of $L_P$.
Let $P'$ be a prefix of the path  $P$ such that the last clause in $P'$ contains $L_P$.  Let $\Gamma$ be
the path $P' \circ ((L_P,L'), C')$, that is, $\Gamma$ is $P'$ with $(L_P,L')$ and $C'$ added to the
end.  Then $\Gamma$ is an alternating path and ${\cal A} \cup \{\Gamma\}$ covers $D$.
\end{theorem}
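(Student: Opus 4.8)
The plan is to prove the two parts of the conclusion separately: that $\Gamma$ is an alternating path, and that ${\cal A}\cup\{\Gamma\}$ covers $D$. Write $P = C_1,p_1,\dots,p_{n-1},C_n$, and let $C_j$ be the last clause of the prefix $P'$, so that $\Gamma = C_1,p_1,\dots,p_{j-1},C_j,(L_P,L'),C'$; since $L_P \in P^* = C_1'\cup\dots\cup C_n'$, the literal $L_P$ lies in some $C_k'$, and the proof should take $P'$ to be the prefix of $P$ ending at that clause, i.e., $j=k$. For the first part, the initial segment $C_1,p_1,\dots,p_{j-1},C_j$ of $\Gamma$ is a prefix of $P$ and hence already meets every alternating-path condition among its own clauses and pairs; only two things remain. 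First, the appended pair $(L_P,L')$ must be legal: $L_P \in C_j$ holds by hypothesis, $L' \in C'$ is given, and $L_P$ and $L'$ are complementary unifiable because $L$ and $L'$ are complementary unifiable (being the literals of resolution) and $L$ is an instance of $L_P$, say $L = L_P\tau$ --- composing $\tau$ with the pair of substitutions witnessing complementary unifiability of $L$ and $L'$ exhibits $L_P$ and $L'$ as complementary unifiable.

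The second remaining point, and the one I expect to be the real obstacle, is the alternating condition at the junction clause $C_j$: in $\Gamma$ this clause is entered by the literal $M_j$ of $p_{j-1}$ (when $j>1$) and exited by the new literal $L_P$, so we need $L_P \not\equiv M_j$. This is exactly where the choice $j=k$ pays off: by the definition of $C_k'$, the literal $L_P \in C_k'$ is distinct under $\equiv$ from each connecting literal of $C_k$ in $P$, in particular from $M_k$ when $k>1$; when $k=1$ there is no entering literal and nothing to check. (Had $P'$ instead been chosen to end at an occurrence of $L_P$ that was itself a connecting literal of that clause in $P$, $\Gamma$ need not be alternating, which is why the choice matters.) The degenerate case $n=1$, where $P = (C_1)$ and $P^* = C_1$, needs no junction argument: $\Gamma = C_1,(L_P,L'),C'$ has two clauses and no interior clause. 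Hence $\Gamma$ is an alternating path.

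For the covering claim, let $\sigma$ be a most general unifier of $L$ and $\neg L'$. Every literal of the resolvent $D$ is of the form $K_0\sigma$ where $K_0$ is either a literal of $C$ or a literal of $C'$ distinct under $\equiv$ from $L'$. If $K_0$ is a literal of $C$, then since ${\cal A}$ covers $C$ there are $Q \in {\cal A}$ and $M \in Q^*$ with $K_0$ an instance of $M$, and then $K_0\sigma$ is again an instance of $M$. If $K_0$ is a literal of $C'$ other than $L'$, then by the definition of $P^*$ applied to $\Gamma$ --- whose last clause is $C'$ and whose entering literal there is $L'$ --- the clause $\Gamma^*$ contains $C'\setminus\{L'\}$ and hence contains $K_0$, so $K_0\sigma$ is an instance of the literal $K_0 \in \Gamma^*$. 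Since these cases exhaust the literals of $D$, the collection ${\cal A}\cup\{\Gamma\}$ covers $D$. Everything here except the junction condition is routine bookkeeping with substitutions and with the definitions of $P^*$ and of a resolvent.
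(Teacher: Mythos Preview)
Your proof is correct and follows essentially the same approach as the paper's own proof: verify that $\Gamma$ is an alternating path from the definitions, then show that literals of $D$ coming from $C$ are covered by ${\cal A}$ (via further instantiation) while those coming from $C' \setminus \{L'\}$ lie in $\Gamma^*$. If anything, you are more careful than the paper at one point: the paper simply asserts that ``$\Gamma$ is an alternating path follows directly from the definitions,'' whereas you correctly isolate the junction condition $L_P \not\equiv M_j$ and explain why it holds when $P'$ is taken to end at the clause witnessing $L_P \in P^*$ (and note that an arbitrary prefix ending at a clause containing $L_P$ need not work).
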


\begin{proof}
The fact that $\Gamma$ is an alternating path follows directly from the definitions.  Let
${\cal A}'$ be ${\cal A} \cup \{\Gamma\}$.  Now, ${\cal A}'$ covers $D$; all literals in $D$ that
are instances of literals in $C'$ are covered because $C'$ is the last clause in $\Gamma$, and
the literal $L'$ which is possibly not in $\Gamma^*$ has been removed from $D$ by the resolution
operation.  The literals in $D$ that come from literals in $C$ are covered because the literals in
$C$ were already covered by $P'$ and the literals in $D$ have only been instantiated in the
resolution operation.
\end{proof}

Suppose $S$ is a set of first-order clauses and $S'$ is a support set for $S$.

 \begin{theorem}
\label{support.path.theorem}
Suppose $C_1, C_2, C_3, \dots C_n$ is a resolution sequence in the set of support
strategy for $S$ with support set $S'$.  Then for every literal $L$ in every derived (that is, non-input) clause $C_i$ there is an
alternating path $P$ from $S'$ of length at most $i$ such that $L$ is an instance of a literal in $P^*$; thus $L$ is an instance of
a literal in a clause $D$ in $S$ at relevance distance at most $i$.  Also, for every input clause
$C_i$ in the proof there is an alternating path $P$ from $S'$ of length at most $i$ ending
in $C_i$.
\end{theorem}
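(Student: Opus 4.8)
The plan is an induction on the position $i$ of $C_i$ in the resolution sequence, carried out for a strengthened statement: for each $i$ there is a finite collection ${\cal A}_i$ of alternating paths in $S$, each starting in a clause of $S'$ and each of length at most $i$, such that if $C_i$ is a derived clause then ${\cal A}_i$ covers $C_i$ (in the sense of the covering definition), and if $C_i \in S'$ then the one-clause path $(C_i)$ belongs to ${\cal A}_i$. Granting this, the theorem follows at once: if $C_i$ is derived and $L \in C_i$, then by the covering property $L$ is an instance of a literal $M$ of $P^{*}$ for some $P \in {\cal A}_i$; since $M$ belongs to some clause $D$ on $P$ and the prefix of $P$ ending at $D$ is an alternating path from $S'$ of length at most $i$, we get $d_S(S',D) \le i$. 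If $C_i \in S'$ the path $(C_i)$ itself is the required path ending in $C_i$, and for an input clause used later as a resolution parent the required path falls out of the inductive step, as explained at the end.

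The base case is immediate: $C_1$ must be an input clause, and if $C_1 \in S'$ one takes ${\cal A}_1 = \{(C_1)\}$, using $(C_1)^{*} = C_1$ and the fact that the covering condition is vacuous for a non-derived clause. For the inductive step, I would first dispose of the case $C_i \in S'$ by setting ${\cal A}_i = \{(C_i)\}$, which covers $C_i$ (for $(C_i)^{*} = C_i$); likewise, if $C_i$ is an input clause not in $S'$ there is no obligation at step $i$. So suppose $C_i$ is derived and $C_i \notin S'$; then $C_i$ is a resolvent of parents $C_j, C_k$ ($j,k < i$) on literals $L \in C_j$, $L' \in C_k$, and since $C_i$ is non-input but not in $S'$ it is $S'$-supported via a parent, while in any case every non-input parent is itself $S'$-supported by the strategy. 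The case split is on whether each parent is input or derived. (1) If both parents are derived, the induction hypothesis supplies covering collections ${\cal A}_j, {\cal A}_k$ of paths of length at most $j$ and $k$; every literal of $C_i$ is an instance of a literal of $C_j \setminus \{L\}$ or of $C_k \setminus \{L'\}$, and being an instance of a literal of $P^{*}$ survives the further instantiation performed by resolution, so ${\cal A}_j \cup {\cal A}_k$ covers $C_i$ and all its paths have length at most $\max(j,k) < i$. (2) If exactly one parent is derived, say $C_k$, then ${\cal A}_k$ covers $C_k$, and applying Theorem~\ref{covering.resolvent} with this ${\cal A}_k$, $C = C_k$ and $C' = C_j$ (an input clause, hence a clause of $S$) produces an alternating path $\Gamma$ in $S$ ending in $C_j$ with ${\cal A}_k \cup \{\Gamma\}$ covering $C_i$; as $\Gamma$ is a prefix of a path of ${\cal A}_k$ extended by one clause, $|\Gamma| \le k+1 \le i$. (3) If both parents are input, one of them, say $C_j$, lies in $S'$ (since $C_i$ is $S'$-supported but not in $S'$); Theorem~\ref{covering.resolvent} applied with ${\cal A} = \{(C_j)\}$, $C = C_j$, $C' = C_k$ yields $\Gamma = (C_j, C_k)$, a length-$2$ path in $S$, with $\{(C_j), \Gamma\}$ covering $C_i$. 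In each case one sets ${\cal A}_i$ to be the collection just constructed, whose paths all start in $S'$ and have length at most $i$.

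The assertion about input clauses used as parents is then read off directly: whenever an input clause occurs as the clause $C'$ in an application of Theorem~\ref{covering.resolvent} in case (2) or (3), the path $\Gamma$ built there starts in $S'$, ends in that clause, and has length at most one more than that of a covering path of the sibling parent, hence at most the index of the resolution step at which the clause is used. The point I expect to be the main obstacle is precisely the bookkeeping that keeps every constructed path inside $S$: an alternating path of $S$ may contain only clauses of $S$, so the clause appended by Theorem~\ref{covering.resolvent} must be an input clause, which is what forces the case split on input-versus-derived parents rather than on which parent happens to be supported — in case (1), where neither parent is a clause of $S$, no clause can be appended and one must instead merely union the two covering collections and appeal to the invariance of covering under instantiation.
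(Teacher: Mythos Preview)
Your proposal is correct and follows essentially the same approach as the paper: the paper's proof is a terse induction on $i$ that maintains a collection ${\cal A}_i$ of alternating paths of length at most $i$ starting in $S'$ which covers $C_i$ whenever $C_i$ is derived, invoking Theorem~\ref{covering.resolvent} for the inductive step. Your version spells out the case analysis on input-versus-derived parents (including the observation that in the both-derived case one simply unions the two covering collections without appending any clause), which the paper leaves implicit, but the invariant and the key lemma used are identical.
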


This theorem is saying that there is an alternating path of length 1 from $S'$ to $C_1$, an
alternating path of length at most 2 from $S'$ to $C_2$ if $C_2$ is an input clause, an alternating path of length at most 3
from $S'$ to $C_3$ if $C_3$ is an input clause, and so on.

\begin{proof}
This follows from Theorem \ref{covering.resolvent}. By induction on $i$, showing this for
$i$ = 1 and showing that if it's true for $i$ it's also true for $i+1$, there is a set
${\cal A}_i$ of alternating paths of length at most $i$ starting in $S'$ and covering $C_i$ if $C_i$ is a derived clause.
This implies all the conclusions of the theorem.
\end{proof}

The implication of Theorem \ref{support.path.theorem} for the set of support strategy is that all input clauses $C_i$ in the proof are relevant at distance at most $i$.

This shows that the set of support strategy restricts attention to relevant clauses, so that the
effectiveness of this strategy is evidence that relevance is helpful.  In fact, one can easily show that if $C$ is a clause in $S$
at relevance distance $n$ then $C$ appears in a set of support proof of length at most $2n-1$.  Thus the set of support strategy
uses exactly the relevant clauses in $S$.

\subsection{Limitations of the Set of Support Strategy}

The question now arises, if the set of support strategy is in some sense equivalent to relevance,
then why not just use it all the time and dispense with relevance altogether?  

First, there are some extensions to relevance that do not naturally incorporate into the set of support
strategy; these involve a purity test and the use of multiple sets of support.  These techniques
have been presented  \cite{plaisted:80,jepl:84} in a couple of early papers.

Also, for propositional clause sets, $DPLL$ with CDCL \cite{DBLP:series/faia/SilvaLM09} is generally much better than resolution,
so relevance techniques may have an advantage over resolution for such clause sets even with the set of support
strategy.

In addition, for a Horn set, that is, a set of first-order Horn clauses, with an all-negative set of support, the set of support strategy is the same as input resolution, which requires every resolution to have one parent that is an input clause.  Define the depth
of a proof so that the depth of a (trivial) proof of an input clause is zero, and if $C$ is proved
by resolving $C_1$ and $C_2$, then the depth of the proof of $C$ is one plus the maximum
depths of the proofs of $C_1$ and $C_2$. Then the depth of a proof from a Horn set with an all-negative set of support  is at least as large as the number of input clauses used, and could be larger if more than one instance of an input clause is used.  This implies that
it is possible to have a clause set with a small support radius but which requires a large number of input clauses and therefore a very deep set of support proof.  Thinking in terms of Prolog style
subgoal trees, the relevance distance corresponds roughly to the depth of the tree but the
length of a set of support proof corresponds to the number of nodes in the tree.  For such clause
sets, it may be better to use hyper-resolution, which basically resolves away all negative literals of a clause at once, because the proof depth corresponds only to the
depth of the tree.

As an example, consider the propositional clause set $\neg p, p \vee \neg q_1 \vee \neg q_2 \vee \neg q_3, q_1 \vee \neg r_1 \vee \neg r_2,
q_2 \vee \neg r_3 \vee \neg r_4, q_3 \vee \neg r_5 \vee \neg r_6$ together with the unit clauses
$r_1, r_2, r_3, r_4, r_5, r_6$.  Suppose $\neg p$ is the set of support.  Then the set of support strategy resolves $\neg p$ with $p \vee \neg q_1 \vee \neg q_2 \vee \neg q_3$ to produce
$\neg q_1 \vee \neg q_2 \vee \neg q_3$.  This then resolves with $q_1 \vee \neg r_1 \vee \neg r_2$
to produce $\neg r_1 \vee \neg r_2 \vee \neg q_2 \vee \neg q_3$.  Two more resolutions produce
$\neg r_1 \vee \neg r_2 \vee \dots \vee \neg r_6$ and six more resolutions with unit clauses
produce a contradiction, for a total of ten resolutions.  However, all clauses are within a relevance
distance of three of the set of support.  Also, hyper-resolution can find a refutation in fewer
levels of resolution.  First, the unit clauses hyper-resolve with $q_1 \vee \neg r_1 \vee \neg r_2,
q_2 \vee \neg r_3 \vee \neg r_4, q_3 \vee \neg r_5 \vee \neg r_6$ to produce the units
$q_1, q_2, q_3$ and these hyper-resolve with  $p \vee \neg q_1 \vee \neg q_2 \vee \neg q_3$ to
produce the unit $p$ which then resolves with $\neg p$ to produce a contradiction.  If the subgoal tree is larger the difference can be more dramatic; the number of levels of resolution by the set of
support strategy can be exponentially larger than the number of levels of hyper-resolution required.  If there are many clauses in the clause set, a deep proof can easily get lost in the huge search space that is generated.
The point is, although set of support restricts attention to relevant clauses, it does not always
process them in the most efficient way, so it may be better to separate relevance detection
from inference in some cases.  The set of support strategy intermingles a relevance restriction with resolution inference.

\section{DPLL}

This section presents an extension of alternating path relevance to the $DPLL$ method.  The $DPLL$ method for testing satisfiability of propositional clause sets is now of major importance in many
areas of computer science, especially with the extension to CDCL \cite{DBLP:series/faia/SilvaLM09}.  For example,
an open conjecture concerning Pythagorean triples was recently solved in this way \cite{10.1007/978-3-319-40970-2_15}.
Therefore any improvement in $DPLL$ is of major importance.
Relevance can decrease the work required for $DPLL$ if the number of $U$ support neighborhood
literals is small for a support set $U$.  This decrease in work requires a modification to $DPLL$.

The $DPLL$ method, without unit rules or CDCL, can be expressed this way, for $Q$ a set of propositional clauses:

\begin{tabular}{l}
$DPLL(Q) \leftarrow$\\
\ \ \ \ \ if the empty clause, representing false is in $Q$ then unsat\\
\ \ \ \ \ \ \ else\\
\ \ \ \ \ if $Q$ is empty then sat else\\
\ \ \ \ \ \ \ choose a literal $L$ that appears in $Q$;\\
\ \ \ \ \ \ \  if $DPLL(Q|L) =$ unsat then $DPLL(Q|\neg L)$ else sat\\
\end{tabular}

Here $Q|L$ is $Q$ with all clauses containing $L$ deleted and
$\neg L$ removed from all clauses containing $\neg L$.  Also, $Q|\neg L$ is defined similarly
with the signs of the literals reversed. We say that $L$ is
the {\em split literal} or the literal chosen for splitting, in this case.

$DPLL$ also has unit rules.  Essentially, if $Q$ contains a unit clause $L$ then $L$ is used to
simplify $Q$ by replacing all occurrences of $L$ by true and simplifying, and replacing all
occurrences of $\neg L$ by false and simplifying.   There is a similar rule if there is a unit
clause $\neg L$ in $Q$, with signs reversed.

\subsection{Relevant $DPLL$}

The material in this section is new.  Let $U$ be a support set for a set $S$ of propositional clauses.
Define the {\em relevance distance} $d(U,L)$ of a literal $L$ from $U$ to be the minimal $n$ such that $L$ or its
negation appears in a clause $C$ of relevance distance $n$ from $U$.

\begin{Define}
If $S$ is a set of clauses then ${\cal L}(S)$ is $\bigcup S$, that is, the set of literals in clauses of $S$.
\end{Define}

\begin{Define}
The {\em stepping sequence} for a set $S$ of propositional clauses and a support set $U$ for $S$
is a sequence $({\cal L}_1, {\cal L}_2, \dots, {\cal L}_n)$ where ${\cal L}_i$ is the set of literals
of $S$ at relevance distance $i$ from $U$, and $n$ is the maximum relevance distance from $U$
of any literal in $S$.  A {\em stepping remainder sequence} for $S$ and $U$ is a sequence
$({\cal L}'_1, {\cal L}'_2, \dots, {\cal L}'_n)$ such that for the stepping sequence
$({\cal L}_1, {\cal L}_2, \dots, {\cal L}_n)$ for $S$ and $U$, ${\cal L}'_i \subseteq {\cal L}_i$ for all $i$.  A
{\em leading literal} of a stepping remainder sequence $({\cal L}'_1, {\cal L}'_2, \dots, {\cal L}'_n)$ 
is an element of ${\cal L}'_i$ or the complement of such an element,
where $i$ is minimal such that ${\cal L}'_i$ is non-empty.  Also, $Step:T$ where
$T$ is a set of clauses and $Step =
({\cal L}'_1,\dots, \dots, {\cal L}'_n)$ denotes the stepping remainder
sequence $({\cal L}''_1, {\cal L}''_2, \dots, {\cal L}''_n)$ in which ${\cal L}''_j = {\cal L}'_j \cap {\cal L}(T)$
for all $j$.  If $Step$ is the stepping remainder
sequence $({\cal L}_1, {\cal L}_2, \dots, {\cal L}_n)$ then $|Step|$ is $\Sigma_i |{\cal L}_i|$.
\end{Define}

Note that support sets are easy to obtain for propositional clause sets.  For example, the set of all-positive clauses and the set of all-negative clauses are both support sets for all propositional clause sets.

The {\em relevant } $DPLL$ method $DPLL$-$Rel$ can be expressed this way, where $Q$ is a set
of propositional clauses and $StepR$ is a stepping remainder sequence for $Q$:

\begin{tabular}{l}
$DPLL$-$Rel(Q, StepR) \leftarrow$\\
1. \ \ \ \ \ if the empty clause, representing false is in $Q$\\
1. \ \ \ \ \ \ \ \ \ \ \ then unsat else\\
2. \ \ \ \ \ if $|StepR| = 0$ then sat else\\
3. \ \ \ \ \ \ \ choose $L$ to be a leading literal from $StepR$;\\
4. \ \ \ \ \ \ \ if $DPLL$-$Rel(Q|L,StepR:(Q|L)) =$ unsat\\
4. \ \ \ \ \ \ \ \ \ \ then $DPLL$-$Rel(Q|\neg L,StepR:(Q|\neg L))$\\
5. \ \ \ \ \ \ \ else sat\\
\end{tabular}

This procedure is called at the top level as $DPLL$-$Rel(S,Step)$ with $Step$ as the stepping sequence for $S$ and $U$, and
$U$ as a support set for $S$. If there is a choice of leading literals, then any DPLL heuristic can be used to choose among them.  Now, the recursive calls to $DPLL$-$Rel$ will have stepping sequences with fewer literals; that is,
$|StepR|$ will be
smaller with each level of recursion.  This enables proofs of properties of $DPLL$-$Rel$ by induction.  Also, for the recursive calls to $DPLL$-$Rel(Q,StepR)$, $StepR$ will always include all $U$ support neighborhood literals of $S$ that remain in $Q$, so that if $|StepR| = 0$
then $Q$ contains no $U$ support neighborhood literals of $S$.  If at this point $Q$ does not contain the empty clause, then one who
understands $DPLL$ can see that this means that
$DPLL$-$Rel$ has essentially found a model of the relevant clauses of $S$. By theorem \ref{relevance.connectedness},
$S$ is satisfiable.

These considerations justify returning "sat" if $StepR_i$ is empty for all $i$ in line 2 of $DPLL$-$Rel$.  Then a model of
the relevant clauses can be returned, without even exploring the remaining literals. 
However, this depends on $U$ being a valid support
set for $S$, that is, $S \setminus U$ is satisfiable.  If there is some doubt about this, then line 2 of $DPLL$-$Rel$ should be
replaced by the following:

\begin{tabular}{l}
2. \ \ \ \ \ if $|StepR| = 0$ then $DPLL(Q)$ else\\
\end{tabular}

\noindent
This means that ordinary $DPLL$ is called in this case to explore the remaining literals.

\begin{theorem}
If $S$ is an unsatisfiable set of propositional clauses, $Step$ is the stepping sequence for $S$ and a support set $U$ for $S$,
and $DPLL$-$Rel(S,Step)$ is called at the top level,
then the number of recursive calls to $DPLL$-$Rel(Q,StepR)$ for various stepping remainder sequences
$StepR$ is bounded by $2^k$ where $k$ is the
number of literals in the $U$ support neighborhood for $S$.
\end{theorem}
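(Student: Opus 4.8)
The plan is to bound the size of the binary recursion tree generated by the top-level call $DPLL$-$Rel(S,Step)$. Its nodes are the calls $DPLL$-$Rel(Q,StepR)$, and a node has two children precisely when it reaches line~3 and splits on a leading literal $L$, with children $DPLL$-$Rel(Q|L,StepR:(Q|L))$ and $DPLL$-$Rel(Q|\neg L,StepR:(Q|\neg L))$. It suffices to show that every root-to-leaf path has length at most $k$, the number of $U$ support neighborhood literals for $S$: then the recursion tree is a binary tree of depth at most $k$, so it has at most $2^k$ leaves and $O(2^k)$ nodes in all, which is the claimed bound (tracking the exact constant is routine).

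First I would record two easy invariants, each by induction on recursion depth. (i) The node reached along a branch whose successive split literals, taken with the signs chosen on that branch, are $L_1,\dots,L_d$ has clause set $Q = S\,|\,L_1\,|\cdots|\,L_d$; that is, $Q = S\,|\,\alpha$ for the partial assignment $\alpha = \{L_1,\dots,L_d\}$. (ii) At that node $StepR$ is exactly the original stepping sequence with each ${\cal L}_i$ replaced by ${\cal L}_i \cap {\cal L}(Q)$; this uses the definition of $Step:T$ together with ${\cal L}(Q|L) \subseteq {\cal L}(Q)$. A consequence of (ii) is that the leading literal chosen at line~3 lies at the \emph{minimum} relevance distance from $U$ among the relevant literals still occurring in $Q$.

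The crux is the claim: \emph{if no $U$ support neighborhood literal of $S$ occurs in $Q$, then the empty clause is in $Q$.} Let $m = U_S^{rad}$, so that $U_S^{nbrhd} = R_{m,S}(U)$ is unsatisfiable, and note that every literal of every clause of $R_{m,S}(U)$ is by definition a $U$ support neighborhood literal. Write $Q = S\,|\,\alpha$ using (i). For each $C \in R_{m,S}(U)$, either $\alpha$ satisfies $C$, in which case $C$ is deleted in $Q$, or $C\,|\,\alpha$ is $C$ with its $\alpha$-falsified literals removed; in the latter case any literal surviving in $C\,|\,\alpha$ would be a support neighborhood literal occurring in a clause of $Q$, contrary to hypothesis, so $C\,|\,\alpha$ is the empty clause. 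If the first alternative held for every $C \in R_{m,S}(U)$ then $\alpha$ would extend to a model of $R_{m,S}(U)$, contradicting its unsatisfiability; hence some $C \in R_{m,S}(U)$ becomes the empty clause in $Q$. I expect this to be the main obstacle, since it is exactly where the choice $m = U_S^{rad}$ (and thus the support-set hypothesis) enters, and where one must observe that $R_{m,S}(U)$ mentions only support neighborhood literals.

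Finally I would assemble the argument. At any node that actually splits, line~1 did not apply, so by the crux claim some $U$ support neighborhood literal still occurs in $Q$; by invariant (ii) the leading literal sits at the minimum relevance distance from $U$ among literals of $Q$, which is therefore at most $m$, so the split literal is itself a $U$ support neighborhood literal. Moreover, after splitting on the atom of $L$ this atom occurs in no descendant clause set, by the definitions of $Q|L$ and $Q|\neg L$, so it is never chosen for splitting again below that node. Hence the split literals along any root-to-leaf path are on pairwise distinct atoms, all drawn from the $k$ literals of the $U$ support neighborhood, so each such path has length at most $k$, which completes the argument.
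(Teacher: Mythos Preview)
Your proposal is correct and follows essentially the same approach as the paper: both argue that every split literal lies at relevance distance at most $m$ (hence its atom appears in the $U$ support neighborhood), because otherwise the empty clause is already present, and then conclude the recursion depth is at most $k$. The only cosmetic difference is that the paper packages the ``crux'' as an inductive invariant (the clauses of $Q$ over the literals in $StepR^m$ remain unsatisfiable, using that $Q$ unsatisfiable implies $Q|L$ and $Q|\neg L$ unsatisfiable), whereas you argue it directly from the partial assignment $\alpha$ and the fixed unsatisfiable set $R_{m,S}(U)$; your formulation is slightly more explicit, and the one minor imprecision---``the split literal is itself a $U$ support neighborhood literal'' when strictly only its atom need appear in the neighborhood---does not affect the count.
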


\begin{proof}
Let $m$ be the $U$ support radius for $S$.  If $StepR = ({\cal L}_1, {\cal L}_2, \dots, {\cal L}_n)$
is a stepping remainder sequence for $S$, let $StepR^j$ be $({\cal L}_1, {\cal L}_2, \dots, {\cal L}_j)$.
By induction on $|StepR|$ for the recursive calls, one shows that for every call to $DPLL$-$Rel(Q,StepR)$, the
set of clauses in $Q$ over the $U$ support neighborhood literals in $StepR^m$ is unsatisfiable.  The proof makes use of
the lemma that if $Q$ is unsatisfiable so are $Q|L$ and $Q|\neg L$ for any literal $L$.  If $|StepR^m| = 0$ then $R$ in
the recursive call must contain the empty clause.  Each recursive call to $DPLL$-$Rel$
reduces the value of $|StepR^m|$ by one or more, and there
are at most two recursive calls to $DPLL$-$Rel$, whence the $2^k$ bound follows.
\end{proof}

Note that $k$ can be much smaller than the set of all literals in relevant clauses in $S$.  This result is not trivial, because some clauses in $Q$ are deleted in $Q|L$ and $Q|\neg L$ and even
clauses that have literals that are not in $StepR^m$ can contribute to such deletions by a succession
of unit deletions in $DPLL$.  Therefore one needs to show that the clauses
over the literals in $StepR^m$ are still unsatisfiable in $Q|L$ and $Q|\neg L$.

This bound of $2^k$ can be much smaller than the worst case $2^n$ bound on ordinary $DPLL$ where $n$ is the
number of atoms (predicate symbols) in $S$.  It is not necessary to know the $U$ support
neighborhood  in order to apply this method.  This method can be extended to CDCL, as before. Of
course, in practice the number of calls is likely to be much smaller than $2^k$.
There are already heuristics for DPLL, but to our knowledge none of them decreases the worst case
time bound as this approach does.

If the $DPLL$ unit rules are used freely, units that are not in the $U$ support neighborhood may be processed
by the unit rules even if the
split literals are handled as in $DPLL$-$Rel$.  The choices to deal with this are to allow all units to be used in the unit rules, or to restrict the unit rules to only use units that are relevant in some sense.

\section{Other extensions of alternating path relevance}

\subsection{Splitting clauses}
Another idea that can help to make relevance more effective is to split a clause into several clauses
that together have the same ground instances.  For example, a clause $C[x]$ containing an
occurrence of a variable $x$ can be split into $C[f_1(\overline{y^1})], C[f_2(\overline{y^2})], \dots$, $C[f_n(\overline{y^n})]$ where
$f_1, f_2, \dots, f_n$ are all the function symbols appearing in the clause set and $\overline{y^i}$ are sequences of new
variables.  If the clause
set has no constant symbols, then one such symbol has to be allowed, in addition.  This idea can be
extended in the obvious way to clauses containing more than one variable.  This idea can be especially
hepful with general axioms such as the equality axioms.  For example, the axiom
$x = y \rightarrow y = x$ can join many clauses together, causing all clauses to be at small
relevance distances from one another.  This can hinder the application of relevance in systems
involving equality.  Splitting clauses that have literals unifying with the complements of literals in many other clauses
can help a lot in such cases.  Also, it can be helpful to choose which variable to split so that the largest number of complement unifications from literals in the clause is broken.
Spltting clauses was one of the techniques used earlier \cite{jepl:84}; the technique used there was to split a clause
$C[x]$ into two clauses $C_1[x]$ and $C_2[x]$ where
$C_1$ restricts $x$ to be instantiated to one of $f_1(\overline{y^1}) \cdots f_k(\overline{y^k})$ for some $k$ and $C_2$ restricts instantiation  of $x$ to one of $f_{k+1}(\overline{y^{k+1}}) \cdots f_n(\overline{y^n})$.

\subsection{A fine type theory}
Also, a very fine type theory can help with equality and with relevance in general.  Types can be incorporated into the unification algorithm, so that for example a variable of type "person"
would not unify with a variable of type "building."  This idea can increase the relevance distance
between clauses and make relevance more effective.

\subsection{Multiple sets of support}
If one has a finite collection $\{I_1, I_2, \dots, I_n\}$ of interpretations of a set $S$ of clauses, then let $T_i$ be
$\{C \in S : I_i \not\models C\}$.  Then the $T_i$ are sets of support for all such $i$.  Let $R^*_{n,S}$ be
$R_{n,S}(T_1) \cap R_{n,S}(T_2) \cap \cdots R_{n,S}(T_k)$.  Then if $S$ is unsatisfiable, it is easy to see that $R^*_{n,S}$ will also
be unsatisfiable for some $n$.  The method of Jefferson and Plaisted \cite{jepl:84} was basically to compute $R^*_{n,S}$ for
various $n$ and apply a purity test until a non-empty set was obtained.  Also, clause splitting was used.  These techniques
proved to be effective for the problems tried.

\section{Equality}
Equality can cause a problem for relevance if literals of the form $s = t$ or their negations appear
in many clauses.  It may be acceptable to use the equality axioms without any special modification for equality.  However, there are also other possibilities.  In general, there needs to be thought devoted to how to integrate equality and relevance, possibly using some kind of completion procedure \cite{bani:98} combined with relevance.

Ordered paramodulation is a theorem proving technique that is effective for clause form
resolution with equality.  It basically uses equations, replacing the large (complex) side of the equation by
the small (simple) side.  However, it is not compatible with the set of support strategy.
For example, consider this set:
$f(a,x)=x$, $f(y,b)=y$, $a \neq b$.  If $a \neq b$ is the set of support, then a refutation requires using the equations in the wrong direction for ordered paramodulation and even paramodulating  from variables.  This kind of paramodulation can be highly inefficient compared to ordered paramodulation.  This is
another evidence that just applying the set of support strategy is not always the best way to handle
relevance.

For equality and relevance there are at least several choices:  1.  Find the relevant clauses by some
method, then use a strategy such as ordered paramodulation and hyper-resolution to find the proof.  2.  Use Brand's modification method \cite{brand:75} on the set of clauses, find the relevant set of clauses, and
then apply some inference method to find the proof.  3.  Modify relevance to take into account equality, possibly by using a set of equations to rewrite things before computing relevance or by incorporating
E-unification \cite{handbook_unification} into the unification algorithm.

\section{Discussion}
The features of alternating path relevance have been reviewed, and some extensions including an extension to $DPLL$ have
been presented.  Graph based methods for computing this relevance measure have been presented.  Alternating path relevance has an unexpected relationship with the set of support strategy.  Some
previous successes with it as well as with the set of support strategy argue that this method can be effective.  A couple of
theoretical results indicating the effectiveness of alternating path relevance have also been presented. Evidence has been
given that several large knowledge bases frequently permit small proofs, suggesting that alternating path relevance and other relevance methods can
be effective for them.  An open problem for many of the relevance techniques is to give a theoretical justification for their
effectiveness.

\bibliography{relevance,sggs-inference-jar,survey13c,CADE15,largepaper2}
\bibliographystyle{plain}

\end{document}